
\documentclass[11pt]{article}
\usepackage{fullpage}
\usepackage{subfig}
\usepackage[usenames,dvipsnames]{xcolor}


\usepackage[linktocpage=true,
	pagebackref=true,colorlinks,
	linkcolor=BrickRed,citecolor=blue,
	bookmarks,bookmarksopen,bookmarksnumbered]
	{hyperref}

	\usepackage{latexsym,graphicx,epsfig,color}
\usepackage{amsfonts,amssymb,amsmath,amsthm,amstext}
\usepackage{lmodern}
\usepackage{enumitem}
\setitemize{itemsep=2pt,topsep=0pt,parsep=0pt}
\usepackage{url,setspace}
\usepackage{multirow}
\usepackage{rotating}
\usepackage{makeidx}
\usepackage{tikz}
\usetikzlibrary{arrows}
\usetikzlibrary{arrows.meta}
\usetikzlibrary{shapes}
\usetikzlibrary{backgrounds}
\usetikzlibrary{positioning}
\usetikzlibrary{decorations.markings}
\usetikzlibrary{patterns}
\usetikzlibrary{calc}
\usetikzlibrary{fit}
\usetikzlibrary{snakes}

\usepackage{accents}
\usepackage{xspace}
\usepackage{algorithm,algpseudocode}
\usepackage{bm}
\usepackage{changepage} 	
\usepackage{thmtools,thm-restate}

\newcommand{\IGNORE}[1]{}
\allowdisplaybreaks

\usetikzlibrary{decorations.markings}
\usetikzlibrary{arrows}
\tikzstyle{block}=[draw opacity=0.7,line width=1.4cm]
\tikzstyle{graphnode}=[circle, draw, fill=black!20, inner sep=0pt, minimum width=6pt]
\tikzstyle{point}=[circle, draw, fill=black!30, inner sep=0pt, minimum width=1pt]
\tikzstyle{input}=[rectangle, draw, fill=black!75,inner sep=3pt, inner ysep=3pt, minimum width=4pt]
\tikzstyle{unmatched}=[graphnode,fill=black!0]
\tikzstyle{shaded}=[graphnode,fill=black!20]
\tikzstyle{matched}=[graphnode,fill=black!100]  	
\tikzstyle{matching} = [ultra thick]
\tikzset{
    >=stealth',
    pil/.style={
           ->,
           thick,
           shorten <=2pt,
           shorten >=2pt,}
}
\tikzset{->-/.style={decoration={
  markings,
  mark=at position .5 with {\arrow{>}}},postaction={decorate}}}

\makeindex

\DeclareMathOperator{\argmax}{arg max}

 \setlength{\parindent}{0.5cm}
 \addtolength{\partopsep}{-2mm}
 \setlength{\parskip}{5pt plus 1pt}

\newtheorem{theorem}{Theorem}
\newtheorem{lemma}{Lemma}[section]
\newtheorem{claim}[lemma]{Claim}
\newtheorem{proposition}[lemma]{Proposition}

\newtheorem{observation}[lemma]{Observation}

\theoremstyle{definition}

\usepackage{thmtools,thm-restate} 

\usepackage{mdframed}
\newtheorem*{mdresult}{Main Result}
\newenvironment{result}{\begin{mdframed}[backgroundcolor=lightgray!40,topline=false,rightline=false,leftline=false,bottomline=false,innertopmargin=2pt]\begin{mdresult}}{\end{mdresult}\end{mdframed}}





\newcommand{\E}{\mathbb{E}}




\newcommand{\cupdot}{\mathbin{\mathaccent\cdot\cup}}

\def \reals {\mathbb{R}}

\newcommand{\one}{\mathbf{1}\xspace}
\newcommand{\poly}{\operatorname{poly}}






\newcounter{note}[section]

\usepackage{bm}
\usepackage[most]{tcolorbox}

\newcommand{\paren}[1]{\ensuremath{\left(#1\right)}\xspace}
\newcommand{\alg}{\ensuremath{\mathcal{A}}\xspace}

\newcommand{\bq}{\mathbf{q}}

\newcommand{\OPT}{\ensuremath{\mathsf{OPT}}}
\newcommand{\Util}{\ensuremath{\mathsf{Utility}}}
\newcommand{\Rev}{\ensuremath{\mathsf{Rev}}}
\newcommand{\Welfare}{\ensuremath{\mathsf{Welfare}}}

\newcommand{\Ai}[1]{\ensuremath{A^{(#1)}}}

\newcommand{\price}{\ensuremath{p}}

\newcommand{\bprice}{\bm{\price}}
\newcommand{\bpricei}[1]{\ensuremath{\bprice^{(#1)}}}

\newcommand{\mech}{\ensuremath{\textnormal{\texttt{BinarySearchMechanism}}}\xspace}
\newcommand{\spmech}{\ensuremath{\textnormal{\texttt{FixedPriceAuction}}}\xspace}


\usepackage[compact]{titlesec}
\setlength{\parskip}{3pt}



\title{Improved Truthful Mechanisms for 
 Subadditive \\ Combinatorial Auctions: Breaking the Logarithmic Barrier}

\author{%
Sepehr Assadi\thanks{(sepehr.assadi@rutgers.edu) Department of Computer Science, Rutgers University.}
\and
Thomas Kesselheim\thanks{(thomas.kesselheim@uni-bonn.de)
 Institute of Computer Science,	    University of Bonn.}
 \and
 {Sahil Singla}\thanks{(singla@cs.princeton.edu)
    Department of Computer Science at
        Princeton University and
        School of Mathematics at 
        Institute for Advanced Study.        Supported in part by the Schmidt Foundation.}
}

\date{}

\begin{document}

\setlength{\abovedisplayskip}{2pt}
\setlength{\belowdisplayskip}{2pt}

\maketitle

\begin{abstract}
\medskip

We present a computationally-efficient truthful mechanism for combinatorial auctions with \emph{subadditive} bidders that achieves an $O((\log\!\log{m})^3)$-approximation to the maximum welfare in expectation using $O(n)$ demand queries; here $m$ and $n$
are the number of items and bidders, respectively. This breaks the longstanding logarithmic barrier for the problem dating back to the $O(\log{m}\cdot\log\!\log{m})$-approximation mechanism of Dobzinski from 2007. Along the way, we also improve and considerably simplify the state-of-the-art  mechanisms
 for submodular bidders. 

\end{abstract}

\bigskip

\tableofcontents

\clearpage



\section{Introduction}\label{sec:intro}

In a combinatorial auction, $m$ items in $M$ are to be allocated to $n$ bidders in $N$ who have valuations for bundles of items. 
Each bidder $i \in N$ has a valuation function $v_i\colon: 2^M \rightarrow \mathbb{R}^{+}$ that describes their value $v_i(S)$ for any bundle $S$ of items. The goal is to design a mechanism that finds an allocation $A = (A_1,\ldots,A_n)$ of the items 
to the bidders so as to maximize the \emph{social welfare} defined as $\Welfare(A) := \sum_{i \in N} v_i(A_i)$. We need the 
mechanisms to be \emph{computationally efficient} in that they run in $\poly(m,n)$ time given proper query access to the valuations of bidders, namely, value queries and demand queries (see~\S\ref{sec:prelim} for definitions).  
Mechanisms should also take into account the strategic behavior of the bidders. A mechanism in which the dominant strategy of each bidder is to reveal their true valuation in
response to given queries is called truthful. For randomized mechanisms, we consider \emph{universally
truthful} mechanisms which are simply distributions over truthful mechanisms.

A central question of Algorithmic Mechanism Design is to understand the inherent clash between computational-efficiency and truthfulness for
 combinatorial auctions (see, e.g.~\cite{DobzinskiNS06,AbrahamBDR12,FotakisKV17,MualemN08,DughmiV11,Dobzinski07,Dobzinski16,AS-FOCS19,AKSW-STOC20}). 
 On one hand, the celebrated VCG mechanism of Vickrey, Clarke, and Groves \cite{Vickrey61,Clarke71,Groves73} is a truthful mechanism that returns the welfare maximizing allocation but requires exponential time 
 for most classes of valuations. On the other hand, constant factor {approximation} algorithms exist for many interesting classes of valuations, but they are not truthful. 
 Are there mechanisms that are both computationally efficient and truthful while still matching  the performance of these algorithms? 
 
 A particularly interesting case of this question is when the bidders' valuations are \emph{subadditive}\footnote{A valuation $v(\cdot)$ is subadditive if for all $A,B \subseteq M$, we have  $v(A \cup B) \leq v(A) + v(B)$.}. 
 This captures the property that items are not complementary in that they will not become more valuable together than apart, a natural and important property in many contexts. Moreover, 
 subadditive valuations encompass various other valuation classes of interest  as special cases, e.g.,  XOS, submodular, and gross substitute valuations (see~\cite{LehmannLN06} for definitions and significance of these classes in the context of combinatorial auctions). 
Finding a welfare maximizing allocation with subadditive bidders is not possible in polynomial time~\cite{DobzinskiNS05}, and thus VCG is not a computationally efficient mechanism for this problem. 
However, from a purely algorithmic perspective, there is a poly-time $2$-approximation algorithm for subadditive combinatorial auctions due to Feige~\cite{Feige06}, improving upon the $O(\frac{\log{m}}{\log\!\log{m}})$ approximation algorithm of 
Dobzinski~et.al.~\cite{DobzinskiNS05}. 

The state-of-the-art truthful and computationally efficient mechanisms for subadditive combinatorial auctions are lagging considerably behind. The first such mechanism  was 
obtained by Dobzinski, Nisan, and Schapira and achieves an $O(\sqrt{m})$-approximation~\cite{DobzinskiNS05}. This was soon after improved by Dobzinski in~\cite{Dobzinski07} to an $O(\log{m}\log\!\log{m})$ approximation.  
Breaking this logarithmic barrier however has remained elusive for way over a decade now.  
 In the meantime, the special case of this problem with submodular (and even XOS) valuations has witnessed tremendous progress: a long line of
work in~\cite{DobzinskiNS06,Dobzinski07,KrystaV12,Dobzinski16,AS-FOCS19} culminated in an $O(\sqrt{\log{m}})$-approximation mechanism of Dobzinski~\cite{Dobzinski16} that finally broke
this logarithmic barrier, and was subsequently improved by the first and the last authors to an $O((\log\!\log{m})^3)$-approximation~\cite{AS-FOCS19}. These results  fall short of addressing subadditive valuations 
due to the known $\Theta(\log{m})$-gap for approximating subadditive valuations via XOS valuations\footnote{A valuation $v'(\cdot)$ $\gamma$-approximates another valuation $v(\cdot)$ if for all $S \subseteq M$, we have $v(S) \leq v'(S) \leq \gamma \cdot v(S)$.}. Indeed, to quote~\cite{Dobzinski16}: ``A logarithmic factor
loss seems inevitable with this approach [i.e., combining~\cite{Dobzinski07} with~\cite{Dobzinski16} (and~\cite{AS-FOCS19} also)]. Breaking the logarithmic barrier also for subadditive
valuations looks challenging''.

\paragraph{Our Result.} We present a new mechanism  that breaks the logarithmic barrier on the approximation ratio of subadditive combinatorial auctions, all the way to a $\poly\!\log\!\log{\!(m)}$ factor. 

\begin{result}
There exists a universally truthful mechanism for combinatorial auctions with
subadditive bidders that achieves an $O((\log\!\log{m})^3)$-approximation to the social
welfare in expectation using at most $n$ 
value and demand queries.
\end{result}

Our result  matches the state-of-the-art for the special case of submodular (and XOS) valuations~\cite{AS-FOCS19} and unifies the previous two lines of work on subadditive and submodular valuations. 
Furthermore, our approach considerably simplifies the previously best mechanism of~\cite{AS-FOCS19}, and improves it from $O((\log\!\log{m})^3)$-approximation to $O((\log\!\log{m})^2)$-approximation (Theorem~\ref{thm:finalMain}). 
 
 \paragraph{Our Techniques.} 
 Our starting point is the prior work on the existence of ``good'' prices for XOS (hence also for submodular) \cite{FeldmanGL15,Dobzinski07} and subadditive valuations \cite{DKL-FOCS20}; these prices are such that 
 a \emph{fixed-price auction} (FPA) on these prices results in an $O(1)$-approximation of welfare for XOS and $O(\log\!\log{m})$-approximation for subadditive bidders. 
 The catch with these prices is that one needs prior knowledge of the (probability distributions of the) valuations, which we cannot assume. The goal in all truthful mechanisms  in this line of work~\cite{DobzinskiNS06,Dobzinski07,KrystaV12,Dobzinski16,AS-FOCS19} is  therefore to obtain ``some estimates'' on these prices and then allocate items using a FPA with these estimates. 
 This is similarly the case for our mechanism. 
 
 For now, let us assume that these good prices belong to the set $\{1,2,4,8,\ldots,\poly(m)\}$ of $O(\log{m})$ different prices; this assumption captures the essence of the problem and can be removed using standard ideas. 
To find a price for every item, our mechanism narrows down the choice of candidate prices for each item in multiple rounds. In particular, in each of these rounds,  either the upper or the lower half of all candidate prices for an item are eliminated 
similar to a binary search. We do this by picking for each item, the price right in the middle of its remaining candidate prices, and then running one FPA on all items but involving only a subset of bidders. This FPA should mostly be seen as a ``practice run''. For all items that are sold, we eliminate the lower half of the prices, and for the remaining items, we eliminate the upper half. After $O(\log\!\log m)$ rounds, only one price remains per item, which is then used in the last round's FPA. There is one further caveat though: Any FPA to narrow down the set of candidate prices is not necessarily a practice run. With a certain probability, they determine already the final allocation. This is necessary to compensate that the mechanism may not necessarily
be able to ``learn'' the correct prices. By ensuring that each bidder takes part in only one of these FPAs, the mechanism is immediately truthful because no bidder can actually influence their own price menu.

Our mechanism can be considered as a simplified version of the price-learning mechanism in~\cite{AS-FOCS19}, which involved using multiple prices for an item in each round to learn the final prices. 
Our analysis approach, however, is entirely different from the one in \cite{AS-FOCS19}. It is significantly simpler while also improving the approximation guarantee for XOS bidders. But more importantly, it allows us to work with the prices 
guaranteed for subadditive valuations in~\cite{DKL-FOCS20}, which are ``less robust'' than the ones for XOS valuations\footnote{For instance, XOS prices have the  property that 
even if only a subset of items are assigned their correct prices, a FPA
results in a welfare at least as large as the prices of these correctly-priced items. Such a guarantee is crucially used in~\cite{AS-FOCS19} as not all prices can be learned correctly and thus one needs to be able to consider only the items with correctly learned prices. However, 
subadditive  prices from~\cite{DKL-FOCS20} do not necessarily satisfy this guarantee.}.

The analysis in \cite{AS-FOCS19} tracks the progress of the mechanism in every round. It is shown that in each round, the mechanism can either refine the learned prices significantly
for most items (``Learnable''), or the intermediate FPAs already give a high-welfare allocation (``Allocatable''). Thus, one way or another, the mechanism guarantees 
a sufficiently large welfare. Unfortunately, maintaining such a guarantee for the case of subadditive bidders does not seem possible due to the different nature of prices for XOS versus subadditive valuations. 

Our new analysis gives \emph{no guarantee} for any particular round. Instead, we lower-bound the the expected utility of each bidder plus the revenue obtained from selling the optimal items for this bidder in the final outcome (not necessarily to this bidder)---it is easy to see that sum of this quantity for all bidders lower-bounds the welfare. At the same time, we show that this quantity is not much less than the optimal welfare contribution of this bidder.
To this end, we trace every item individually through the rounds. Effectively, one of these cases has to take place: Maybe the item reaches its intended final price, which means it will contribute in the last round's FPA. Or maybe 
the intended price gets eliminated in some round, which means that the item is either not sold at a smaller price, or it is sold at a higher price than the intended one. If the item is not sold at a smaller price, we argue that the bidder who gets this item in the optimal allocation could buy it in the respective round (bidders are assigned to different rounds \emph{randomly}).  If the item is sold at a higher price, we argue that we obtain enough revenue from the item. Due to subadditivity, we are able to combine the contribution of all these different rounds and conclude the analysis. 

\paragraph{Further Related Work.} The question that whether or not computationally efficient and truthful mechanisms can match the performance of algorithms has been studied extensively in the literature. 
For mechanisms that only use value queries, the answer is \emph{no}: algorithms that use only $\poly(m,n)$ many value queries 
can only achieve an $m^{\Omega(1)}$-approximation~\cite{PapadimitriouSS08,Dobzinski11,DughmiV11,DobzinskiV12,DobzinskiV12,DanielySS15} even 
for submodular valuations, while $O(1)$-approximation algorithms are known for these valuations~\cite{LehmannLN06,Vondrak08,FeigeV06}. 
However, these impossibility results no longer apply to mechanisms that are allowed other types of queries, in particular, demand queries. This has led the researchers to study
the communication complexity of this problem that capture arbitrary queries~\cite{Nisan00,BlumrosenN02,DobzinskiNS05,NisanS06,DobzinskiV13,DobzinskiNO14,Dobzinski16b,Assadi17ca,BravermanMW17,EzraFNTW19,AKSW-STOC20}. 
The only known  separation between communication complexity of truthful mechanism vs. arbitrary algorithms was very recently established in~\cite{AKSW-STOC20}, building on a characterization of truthful mechanisms in~\cite{Dobzinski16b} (see also~\cite{BravermanMW17,EzraFNTW19} for earlier attempts based on~\cite{Dobzinski16b}). 


\section{Preliminaries}\label{sec:prelim}

\paragraph{Notation.} We denote by $N$ the set of $n$ bidders and by $M$ the set of $m$ items.   We use bold-face letters to denote vectors of prices and capital letters for allocations. For a price vector $\bprice$ and a set of items $M' \subseteq M$, we define $\bprice(M') := \sum_{e \in M'} \price_e$. 
Let $(S_1^\star, \ldots, S_n^\star)$ denote the optimal allocation that maximizes the welfare.  Note that this allocation partitions the items $M$.

\subsection{Subadditive Valuation Functions}\label{sec:valuations}

We make the standard assumption that valuation $v_i$ of each bidder $i$ is normalized, i.e., $v_i(\emptyset)=0$, and is monotone, i.e., $v_i(S) \leq v_i(T)$ for every $S \subseteq T \subseteq M$. 
We are interested in the case when bidders valuations are \emph{subadditive} and hence capture the notion of ``complement free'' valuations.  
A valuation  $v$ is subadditive if  $v(S \cup T)  \leq v(S) + v(T)$ for any $S,T \subseteq M$. 


\paragraph{Query access to valuations.} Since valuations have size exponential in $m$, a common assumption is that valuations are specified via certain queries instead, in particular, value queries and demand queries. 
A value query to valuation $v$ on bundle $S$ reveals the value of $v(S)$. A demand query specifies a price vector $\bprice$ on items and the
answer is the ``most demanded'' bundle under this pricing, i.e., a bundle $S \in \argmax_{S'} \{v(S')-\bprice(S')\}$.

\subsection{Good Prices for Subadditive Bidders} 

Our mechanism crucially uses  the following prices of~\cite{DKL-FOCS20}. 

\begin{lemma}[\cite{DKL-FOCS20}] \label{lem:DKL}
Given any set of items $U \subseteq M$ and valuation $v: 2^M \rightarrow \mathbb{R}^+$, there is a distribution $\lambda$ over item sets $S \subseteq U$ and prices $\bq^\star \in \reals_{\geq 0}^{|U|}$ such that for any $T \subseteq U$, we have
\[  \sum_{S \subseteq U} \lambda_S \Big( v(S\setminus T) - \bq^\star(S) \Big) ~\geq~ \frac{1}{\alpha} \cdot v(U)  - \bq^\star(T).
\]
Here, if valuation $v$ is subadditive then $\alpha = O(\log\!\log\!m)$ and if $v$ is XOS then $\alpha = O(1)$.
\end{lemma}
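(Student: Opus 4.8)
The plan is to handle the two cases separately. For XOS valuations (where $\alpha=O(1)$) I can give a direct construction; for subadditive valuations I would bootstrap from the XOS case. The crude bootstrap already works: every subadditive $v$ is pointwise sandwiched between some XOS valuation $w$ and $O(\log m)\cdot w$ on all subsets of $U$, so applying the XOS case to $w$ and using $v\ge w$ together with $w(U)\ge v(U)/O(\log m)$ yields $\alpha=O(\log m)$, which recovers the Dobzinski-type bound. Getting down to $\alpha=O(\log\!\log m)$ requires the recursive decomposition of~\cite{DKL-FOCS20}, which stitches together XOS-type prices at $O(\log\!\log m)$ different granularities with only a constant-factor loss per level; I sketch its shape below. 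Normalize $v(U)=1$ throughout.

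For the XOS case, recall that an XOS valuation is the pointwise maximum of a family of nonnegative additive clauses; let $a\colon 2^U\to\reals_{\ge 0}$ be the clause achieving the maximum at $U$, so $a(S)\le v(S)$ for all $S\subseteq U$ and $a(U)=v(U)=1$. Take $\bq^\star_e:=\tfrac14 a_e$ for each $e\in U$, and let $\lambda$ be the law of the random set $S\subseteq U$ that includes each item independently with probability $\tfrac12$. Fix $T\subseteq U$. Since $S\setminus T\subseteq U$ we have $v(S\setminus T)\ge a(S\setminus T)$, so by linearity of expectation $\sum_S\lambda_S v(S\setminus T)\ge\tfrac12 a(U\setminus T)=\tfrac12-\tfrac12 a(T)$, while $\sum_S\lambda_S \bq^\star(S)=\tfrac12\bq^\star(U)=\tfrac18$. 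Hence the left-hand side of the claimed inequality is at least $\tfrac38-\tfrac12 a(T)$ and its right-hand side equals $\tfrac1\alpha-\tfrac14 a(T)$; since $a(T)\le a(U)=1$, their difference is at least $\tfrac14(1-a(T))\ge 0$ once $\alpha=8$. The only nonroutine ingredient is the existence of the supporting clause $a$, which is the definition of XOS.

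For the subadditive case I would first normalize the instance: discard every item $e$ with $v(\{e\})<\tfrac1{2m}$; by subadditivity the surviving set $U'$ still has $v(U')\ge\tfrac12$, and every surviving singleton value falls in one of $L=O(\log m)$ dyadic scales. One then argues by induction on $L$ that such valuations admit suitable $\lambda,\bq^\star$ with $\alpha=O(\log L)=O(\log\!\log m)$. The inductive step splits the scales into a ``high'' block $U_{\mathrm{hi}}$ and a ``low'' block $U_{\mathrm{lo}}$ separated by a value gap $\approx 2^{L/2}$; by subadditivity $\max\{v(U_{\mathrm{hi}}),v(U_{\mathrm{lo}})\}\ge\tfrac12 v(U')$, and recursion supplies balanced data $(\lambda^{\mathrm{hi}},\bq^{\mathrm{hi}})$ and $(\lambda^{\mathrm{lo}},\bq^{\mathrm{lo}})$ for the two halves, each spanning about $L/2$ scales. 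The point is to \emph{merge} these into a single $(\lambda,\bq^\star)$ on $U'$ losing only an \emph{additive} $O(1)$ in $\alpha$: recursing on the larger block alone would incur a multiplicative factor per level and only give $\alpha=O(L)=O(\log m)$. The base case $L=O(1)$ reduces again to the XOS case up to $O(1)$, and the $v(U')\ge\tfrac12 v(U)$ normalization costs one more factor $2$ at the end, so unrolling yields $\alpha=O(\log\!\log m)$.

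The main obstacle is the merge. The naive attempt---concatenate $\bq^{\mathrm{hi}},\bq^{\mathrm{lo}}$ and draw $S=S_{\mathrm{hi}}\cup S_{\mathrm{lo}}$ with independent $S_{\mathrm{hi}}\sim\lambda^{\mathrm{hi}}$, $S_{\mathrm{lo}}\sim\lambda^{\mathrm{lo}}$---fails, because the expected revenue $\sum_S\lambda^{\mathrm{lo}}_S\bq^{\mathrm{lo}}(S)$ collected from the low block is $\Theta(v(U_{\mathrm{lo}}))$, of the same order as $v(U_{\mathrm{hi}})$, and thus cancels the $\tfrac1\alpha v(U')$ gain against an adversary taking $T=U_{\mathrm{hi}}$. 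Making it go through requires genuinely using the scale gap---for instance sampling the low block only with probability of order $2^{-\Omega(L/2)}$, matching the value gap, so that it contributes negligibly to the expected revenue yet still contributes $\Omega(v(U_{\mathrm{lo}}))$ to $v(S\setminus T)$ whenever $T$ erases the high block---together with subadditivity of $v$ to glue the two parts' contributions. This is precisely the step where the ``less robust than XOS'' behaviour of subadditive prices noted in the introduction makes the bookkeeping delicate, and I expect it to be where all the real work lies.
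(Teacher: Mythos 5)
First, note that the paper itself does not prove Lemma~\ref{lem:DKL}: it is imported verbatim from \cite{DKL-FOCS20}, so there is no in-paper argument to compare against. Judged on its own terms, your XOS case is a complete and correct proof: with the supporting additive clause $a$ at $U$, prices $\bq^\star_e=\tfrac14 a_e$, and $\lambda$ the independent-$\tfrac12$ product distribution, the computation $\E[v(S\setminus T)]-\E[\bq^\star(S)]\ge \tfrac38-\tfrac12 a(T)\ge \tfrac18 v(U)-\tfrac14 a(T)$ is right, and both $\lambda$ and $\bq^\star$ are fixed independently of $T$ as required. Your ``crude bootstrap'' to subadditive valuations via the standard $O(\log m)$-approximation by an XOS function is also correct, but it only yields $\alpha=O(\log m)$.

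The genuine gap is the subadditive case with $\alpha=O(\log\!\log m)$, which is the entire content of the lemma for this paper (it is exactly what lets the mechanism break the logarithmic barrier). Your recursion-on-scales outline is the right shape --- $O(\log m)$ dyadic price levels, halving the number of levels per recursion step, $O(\log\log m)$ depth with additive constant loss per level --- and you correctly diagnose why the naive merge fails: the expected revenue $\sum_S\lambda^{\mathrm{lo}}_S\,\bq^{\mathrm{lo}}(S)$ extracted from the low block is of order $v(U_{\mathrm{lo}})$ and can swamp the $\tfrac1\alpha v(U)$ gain when $T$ deletes the high block. But you then explicitly defer the fix (``I expect it to be where all the real work lies''), and the proposed remedy --- downsampling the low block with probability matching the scale gap --- is stated only as a hope, with no verification that the merged pair $(\lambda,\bq^\star)$ satisfies the inequality uniformly over all $T\subseteq U$, nor that the additive losses telescope to $O(1)$ per level rather than compounding multiplicatively. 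Since the lemma is false with $\alpha=O(1)$ for general subadditive valuations and trivial with $\alpha=O(\log m)$, everything hinges on precisely this unproved merge; as it stands the proposal establishes the XOS half of the lemma but not the subadditive half that the mechanism actually needs.
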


The primary challenge in using the prices of \cite{DKL-FOCS20} is that they may not be robust to losing a fraction of items. Indeed, our mechanism can at best hope to learn prices of only a fraction of the items, but Lemma~\ref{lem:DKL} loses additively for any such lost items. 

\subsection{A Fixed-Price Auction}\label{sec:fixed-price}

We use a standard fixed-price auction as a subroutine in our mechanism. For an \emph{ordered} set $N$ of bidders, $M$ of items, and a price vector $\bprice$, define $\spmech(N,M,\bprice)$  as follows. 
\begin{tcolorbox}[breakable]
	\underline{$\spmech(N,M,\bprice)$}
	\begin{enumerate}
		\item Iterate over the bidders $i$ of the ordered set $N$ in the given order:
		\begin{enumerate}[topsep=0pt]
			\item Allocate $A_i \in \argmax_{S \subseteq M} \{ v_i(S) - \bprice(S) \}$ to bidder $i$ and update $M \leftarrow M \setminus A_i$. 
		\end{enumerate}
		\item Return the allocation $A = (A_1,\ldots,A_n)$. 
	\end{enumerate}
\end{tcolorbox}

It is easy to see that $\spmech$ can be implemented using one demand query per bidder. Its truthfulness is also easy to check as  bidders have no influence on the pricing mechanism. 

\paragraph{Utility and revenue.} For a $\spmech$, we denote the \emph{utility} of any bidder $i\in N$ by $\Util_i := v_i(A_i) - \bprice(A_i)$. For any set $S\subseteq M$ of items, we denote the \emph{revenue} obtained by selling items in $S$ by $\Rev(S):= \sum_{e \in S} p_e \cdot \one_{e~\text{is allocated}}$. This implies that the total welfare $\sum_{i\in N} v_i(A_i)$ of a $\spmech$ equals $\sum_{i\in N} \Util_i + \Rev(M)$.



	\section{The Binary-Search Mechanism} \label{sec:main-mech}
	
	In this section, we design our mechanism under the assumption  that we have a rough estimate of the scale of valuation functions. Formally, let us assume that we are given an estimate $\psi$ for  the value $\max_i v_i(M)$.
	Simply asking the bidders for this value would lead to complex incentive issues, but the assumption can be removed using common sampling techniques from algorithmic mechanism design, which we give in \S\ref{sec:end-mech} for completeness. 
	
	The idea of our mechanism is to find a good price for every item. The knowledge of $\psi$ helps in the following way. Observe that pricing an item above $\psi$ will  be unreasonable since no bidder would be willing to buy it. At the same time, items that are only bought if they have a ``tiny'' value compared to $\psi$ are not important for social welfare. Such considerations lead to the observation that we can restrict our attention to a small set $B$ of $O(\log{m})$ \emph{candidate prices} (Observation~\ref{obs:roundedPrices}).
	
	Based on this set of candidate prices, we design our binary-search mechanism in \S\ref{sec:binary-search-on-prices}. For every item, we run a binary-search procedure of $\log_2 \lvert B \rvert = O(\log\!\log{m})$ rounds. In each round, half of the candidate prices are eliminated, which leaves us with a unique price for each item in the end.

	
	
	\subsection{Existence of Good Candidate Prices}
	\label{sec:cand-prices}
	
	As a first step, we observe that given $\psi$ it is enough to restrict our attention to a small number of candidate prices. More precisely, we derive a set of $O(\log m)$ prices, which are good in the sense that Lemma~\ref{lem:DKL} still applies approximately if we restrict to these prices. The reason is that neither very high nor very low prices are important, and that it is okay to round prices to powers of 2. The following observation follows easily from Lemma~\ref{lem:DKL}.
	
\begin{observation} \label{obs:roundedPrices}
For a given $\psi$, define  $B := \{0, 2^{-3 \lceil \log_2 m \rceil} \psi, 2^{1 - 3 \lceil \log_2 m \rceil} \psi, \ldots, 2^{-1} \psi\}$ to be a set of $3 \lceil \log_2 m \rceil$ \emph{candidate prices}. 
Now for any set of items $U \subseteq M$ and any  bidder $i \in N$, if $v_i(U) \leq \psi$ then there exist a distribution $\lambda$ over item sets $S \subseteq U$ and  prices  $\bq \in B^{|U|}$ 
such that for any $T \subseteq U$, 
\[  \sum_{S \subseteq U} \lambda_S \Big( v_i(S\setminus T) - \bq(S) \Big) ~ \geq ~ \frac{1}{\alpha} \cdot v_i(U) - 2 \bq(T) - \frac{\psi}{m^2}.
\]
Here, if valuation $v_i$ is subadditive then $\alpha = O(\log\!\log m)$ and if $v_i$ is XOS then $\alpha = O(1)$.
\end{observation}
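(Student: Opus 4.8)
The plan is to derive Observation~\ref{obs:roundedPrices} from Lemma~\ref{lem:DKL} by taking the prices $\bq^\star$ guaranteed there and rounding each coordinate into the set $B$. Concretely, given $U$ and bidder $i$ with $v_i(U) \le \psi$, apply Lemma~\ref{lem:DKL} to $v = v_i$ to obtain a distribution $\lambda$ over $S \subseteq U$ and prices $\bq^\star \in \reals_{\ge 0}^{|U|}$. For each item $e \in U$, I would define $\price_e \in B$ as follows: if $\bq^\star_e \ge \psi/2$ set $\price_e = 2^{-1}\psi$ (the largest candidate); if $\bq^\star_e < 2^{-3\lceil\log_2 m\rceil}\psi$ set $\price_e = 0$; otherwise round $\bq^\star_e$ \emph{down} to the nearest power-of-two multiple of $\psi$ lying in $B$, so that $\price_e \le \bq^\star_e \le 2\price_e$. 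This is the natural ``neither too high nor too low, and powers of two suffice'' rounding described in the text.

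The key inequality to maintain is the one from Lemma~\ref{lem:DKL}, and I would track how each of the three terms changes. For the revenue-type term $\bq^\star(S)$ on the left, rounding down only helps: $\bq(S) \le \bq^\star(S)$, so $-\bq(S) \ge -\bq^\star(S)$, and the left-hand side does not decrease on that account. For the $v_i(S\setminus T)$ term nothing changes. On the right-hand side, the term $-\bq^\star(T)$ becomes $-2\bq(T)$; since $\bq^\star_e \le 2\price_e$ for the rounded-down items and $\bq^\star_e \le \psi/2 = \price_e \le 2\price_e$ for the capped items, we have $\bq^\star(T) \le 2\bq(T)$ over all items except those zeroed out, and for zeroed-out items $\bq^\star_e < 2^{-3\lceil\log_2 m\rceil}\psi$. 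So replacing $-\bq^\star(T)$ by $-2\bq(T)$ loses at most the total mass of the zeroed-out prices, which is at most $m \cdot 2^{-3\lceil\log_2 m\rceil}\psi \le m \cdot m^{-3}\psi = \psi/m^2$. That accounts for the additive $-\psi/m^2$ slack. The $\frac{1}{\alpha}v_i(U)$ term is untouched, and the values of $\alpha$ for subadditive and XOS valuations carry over verbatim from Lemma~\ref{lem:DKL}.

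Putting this together: starting from $\sum_{S} \lambda_S(v_i(S\setminus T) - \bq^\star(S)) \ge \frac{1}{\alpha}v_i(U) - \bq^\star(T)$, I replace $\bq^\star(S)$ by the smaller $\bq(S)$ on the left (inequality only improves), and bound $-\bq^\star(T) \ge -2\bq(T) - \psi/m^2$ on the right, yielding exactly $\sum_S \lambda_S(v_i(S\setminus T) - \bq(S)) \ge \frac{1}{\alpha}v_i(U) - 2\bq(T) - \psi/m^2$, with $\bq \in B^{|U|}$ as required. The only place $v_i(U) \le \psi$ is used is to guarantee $\bq^\star_e \le \psi/2$ can be assumed without loss — i.e., that the cap at the top of $B$ is harmless; intuitively no sensible price for an item should exceed roughly $v_i(U)$, and capping a price that is too high at $\psi/2$ only changes the demanded bundles in a way that does not hurt the bound (one may alternatively re-run Lemma~\ref{lem:DKL} observing its prices can be taken bounded by $v_i(U)$).

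The main obstacle I anticipate is the top-end capping step: Lemma~\ref{lem:DKL} as stated does not explicitly bound $\bq^\star$ from above, so one must argue that truncating large coordinates of $\bq^\star$ to $\psi/2$ preserves the inequality. The cleanest route is probably to observe that in the construction of~\cite{DKL-FOCS20} the prices can be taken to be at most $v_i(U) \le \psi$ (any price exceeding the total value is pointless), or failing a black-box appeal, to note that increasing a price on the left only increases $-\bq(S)$-contributions in expectation no more than it helps on the right via $-2\bq(T)$ versus $-\bq^\star(T)$ — a short case analysis comparing items in $T$ versus items in $S \setminus T$. Everything else is the routine power-of-two rounding bookkeeping sketched above.
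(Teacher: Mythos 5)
Your rounding scheme and bookkeeping are essentially identical to the paper's: round $q^\star_e$ down to the next smaller element of $B$, note that this can only increase the left-hand side, absorb the factor of $2$ into $-2\bq(T)$, and charge the zeroed-out items to the additive $\psi/m^2$ term via $m\cdot 2^{-3\lceil\log_2 m\rceil}\psi\le\psi/m^2$. That part is correct.

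The one step you explicitly leave open --- justifying that $q^\star_e$ can be assumed bounded above by $\psi$ so that the top-end cap at $\psi/2$ loses only a factor $2$ --- is a genuine gap, and neither of your two suggested fixes is what the paper does. Appealing to the internals of the construction in~\cite{DKL-FOCS20} is not available from the lemma statement as given, and the ``short case analysis'' is not specified. The paper's resolution is a self-contained exchange argument on the distribution $\lambda$ itself: if some set $S$ in the support of $\lambda$ contains an item $e$ with $q^\star_e > v_i(U)$, then for \emph{every} $T$ one has $v_i(S\setminus T)-\bq^\star(S)\le v_i(U)-q^\star_e<0$, so moving the mass $\lambda_S$ onto the empty set (which contributes $0$ to the sum) can only increase the left-hand side of Lemma~\ref{lem:DKL} while leaving its right-hand side unchanged. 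Hence one may assume without loss of generality that $q^\star_e\le v_i(U)\le\psi$ for every item appearing in a support set, after which your inequality $q_e\le q^\star_e\le 2q_e+\psi/m^3$ holds for all relevant items and the rest of your argument goes through verbatim. Also note a sign slip in your write-up: for the capped items you write $\bq^\star_e\le\psi/2=\price_e$, but capped items satisfy $\bq^\star_e\ge\psi/2$ by definition; what you actually need (and what the exchange argument supplies) is $\bq^\star_e\le\psi=2\price_e$.
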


Note that the statement requires $\psi$ to be at least $v_i(U)$. If $\psi$ is in the order of $v_i(U)$, then the negative term $\frac{\psi}{m^2}$ gets negligible. The higher $\psi$ becomes, the weaker is the bound.

\begin{proof}[Proof of Observation~\ref{obs:roundedPrices}]
We obtain $q_e$ by rounding $q^\star_e$ from Lemma~\ref{lem:DKL} to the next smaller number in $B$. Note that this is always possible because $0$ is contained in $B$. We can assume without loss of generality that $q^\star_e \leq v_i(U) \leq \psi$ for all $e$ that are contained in a set $S$ in the support of $\lambda$ because otherwise $v_i(S\setminus T) - q(S) < v_i(U) - v_i(U) = 0$ for any set $T$. This would mean that the contribution of $S$ to sum is always negative and one could shift the probability mass to the empty set. Overall, this means that for all relevant $e$, we have $q_e \leq q^\star_e \leq 2 q_e + \frac{\psi}{m^3}$.

By the property due to Lemma~\ref{lem:DKL},
\begin{align*}
\sum_{S \subseteq U} \lambda_S \Big( v_i(S\setminus T) - \bq(S) \Big) ~&\geq~ \sum_{S \subseteq U} \lambda_S \Big( v_i(S\setminus T) - \bq^\star(S) \Big) \\
&\geq~ \frac{1}{\alpha} \cdot v_i(U) - \bq^\star(T) \quad \geq \quad \frac{1}{\alpha} \cdot v_i(U) - 2 \bq(T) - \lvert U \rvert \frac{\psi}{m^3}.
\end{align*}
The claim follows because $\lvert U \rvert \leq m$.
\end{proof}

\subsection{Binary Search on the Candidate Prices}
\label{sec:binary-search-on-prices}

The mechanism takes $\psi$ as input, that is, the (estimated) value of $\max_i v_i(M)$. Based on this, it computes the set of candidate prices $B$ as described in Observation~\ref{obs:roundedPrices}\footnote{We assume that $\lvert B \rvert$ is a power of $2$, by adding arbitrary numbers to $B$ until reaching the closest power of $2$.}. It then determines a single price from $B$ for each item within $\beta = \log_2 \lvert B \rvert$ rounds. This is done akin to binary search. So in each of the $\beta$ rounds every item eliminates half of its candidate prices until it gets a unique price. This price is used for a fixed-price auction in the round $\beta+1$.


To perform the binary search, we partition the ordered set of bidders $N$ into $\beta + 1$ groups $N_1, \ldots, N_{\beta + 1}$. In round $\ell \in [\beta]$, for each item $e$, let $\mathbf{b}^{(\ell)}_e$ denote the ordered vector of remaining $\lvert B \rvert/2^{\ell-1}$ candidate prices. Let $p^{(\ell)}_e$ be the price exactly in the middle. We now try these prices in a fixed-price auction on the bidders in $N_\ell$. The outcome of this fixed-price auction is used to update the prices, that is, to determine $\mathbf{b}^{(\ell+1)}_e$ for all $e$. In particular, if one of the bidders in $N_\ell$ buys $e$, we define $\mathbf{b}^{(\ell+1)}_e$ to only contain prices from $\mathbf{b}^{(\ell)}_e$ that are at least $p^{(\ell)}_e$. Otherwise, it  only contain prices that are less than $p^{(\ell)}_e$.

While the above procedure might reach the correct price of an item, it can also end up with completely wrong prices. However, as we show, if we choose a random one of the $\beta+1$ fixed-prices auctions for the final allocation, the expected social welfare is sufficiently large.

\begin{tcolorbox}[breakable]
\underline{$\mech{(N,M,\psi)}$} 
\begin{enumerate}
	\item Let $B = \{0, 2^{-3 \lceil \log_2 m \rceil} \psi, 2^{1 - 3 \lceil \log_2 m \rceil} \psi, \ldots, 2^{-1} \psi\}$.
	\item Initialize $\mathbf{b}^{(1)}_e$ for all $e \in M$ to contain all elements of $B$ ordered increasingly.
	\item For every $i \in N$ draw $r_i$ independently uniformly from $[\beta+1]$, where $r_i$ denotes the round in which bidder $i$ participates. For any $\ell \in [\beta+1]$, let $N_\ell := \{ i \mid r_i = \ell\}$ in \underline{the same order} as $N$.
	\item Select a uniformly random final allocation round $r^\star \in [\beta+1]$.
	\item  For $\ell = 1$ to $\beta$ \underline{rounds}: 
	\begin{enumerate} [topsep=0pt, itemsep=0pt]
		\item For every item $e$, define the price $p^{(\ell)}_e = b^{(\ell)}_{e, k/2 + 1}$, where  $k =  \lvert B \rvert / 2^{\ell-1} $  and $\mathbf{b}^{(\ell)}_e = (b^{(\ell)}_{e, 1}, \ldots, b^{(\ell)}_{e, k})$ is the vector of  candidate prices for $e$;
		\item Run $\spmech(N_\ell,M,\bpricei{\ell})$ and let $\Ai{\ell}$ be the allocation;
		\item For every item $e$, if $e$ is allocated in $\Ai{\ell}$, define remaining candidate prices $\mathbf{b}_e^{(\ell+1)} = (b^{(\ell)}_{e, k/2+1}, \ldots, b^{(\ell)}_{e, k})$, and otherwise
		 define $\mathbf{b}_e^{(\ell+1)} = (b^{(\ell)}_{e, 1}, \ldots, b^{(\ell)}_{e, k/2})$;
		\item\label{line:coin-toss} If $r^\star = \ell$ then return $\Ai{\ell}$ as the final allocation. 
	\end{enumerate}
	\item Run $\spmech(N_{\beta+1},M,\bpricei{\beta+1})$, where $p^{(\beta+1)}_e$ in $\bpricei{\beta+1}$ is the unique price in  $\mathbf{b}^{(\beta+1)}_{e}$, and return this as the final allocation.
\end{enumerate}
\end{tcolorbox}

We remark that  $\mech$ can be implemented using one demand query per bidder. Its truthfulness is also easy to check as  bidders have no influence on their prices since they appear in at most one $\spmech$. 
The following theorem is our main technical result. 

\begin{theorem}\label{thm:main-mech}
	For a combinatorial auction with $n$ subadditive bidders and $m$ items, given $\psi \geq \max_i v_i(M)$, $\mech$ is universally truthful, uses at most $n$ demand queries and polynomial time, and has expected welfare 
	\[ \E[\Welfare]  ~\geq ~\frac{1}{2 \cdot \alpha \cdot (\beta+1)^2} \sum_i v_i(S_i^\star) - \frac{\psi}{m},
	\]
	where $\alpha = O(\log\!\log{m})$ for subadditive and $\alpha = O(1)$ for XOS valuations, and $\beta = O(\log\!\log{m})$. 
\end{theorem}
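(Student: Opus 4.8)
\emph{Easy parts and the welfare decomposition.} Truthfulness, the bound of $n$ demand queries, and polynomial running time are immediate: each bidder $i$ participates in exactly one $\spmech$, namely the one of round $r_i\in[\beta+1]$, so it can never influence a price it is offered, and each $\spmech$ makes one demand query per participating bidder. For the welfare bound, apply the identity $\Welfare=\sum_i\Util_i+\Rev(M)$ of \S\ref{sec:fixed-price} to the auction run in the final round $r^\star$: since $(S_1^\star,\dots,S_n^\star)$ partitions $M$ one has $\Rev(M)=\sum_i\Rev(S_i^\star)$, and bidders outside round $r^\star$ contribute $0$ to $\sum_i\Util_i$, so $\Welfare=\sum_i\big(\Util_i+\Rev(S_i^\star)\big)$ with $\Util_i,\Rev$ evaluated in round $r^\star$. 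Hence it suffices to show $\E[\Util_i+\Rev(S_i^\star)]\ge\frac{1}{2\alpha(\beta+1)^2}v_i(S_i^\star)-\frac{\psi}{m^2}$ for each bidder $i$ with $S_i^\star\neq\emptyset$, and sum over the at most $m$ such bidders (those with $S_i^\star=\emptyset$ contribute a nonnegative amount).

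\emph{Setup for a fixed bidder.} Fix $i$. As $v_i(S_i^\star)\le v_i(M)\le\psi$, Observation~\ref{obs:roundedPrices} applied to $v_i$ with $U=S_i^\star$ yields a distribution $\lambda$ over $S\subseteq S_i^\star$ and prices $\bq\in B^{|S_i^\star|}$ with the stated guarantee. Think of $\mech$ as running all $\beta+1$ auctions and then outputting round $r^\star$'s allocation; this leaves the output distribution unchanged, and the binary search now depends only on $(r_j)_j$. Condition on $(r_j)_{j\neq i}$, so the binary search run \emph{without} $i$ is fixed, and trace each $e\in S_i^\star$ through the $\beta$ search rounds; exactly one of three events occurs, which we use to assign $e$ a designated round $\ell_e$: (i) the intended price $q_e$ is never eliminated, so $\ell_e:=\beta+1$ and $p^{(\beta+1)}_e=q_e$ exactly; (ii) $q_e$ is first eliminated in some round $\ell^\star$ because $e$ \emph{was sold} there, so $p^{(\ell^\star)}_e>q_e$, hence $p^{(\ell^\star)}_e\ge2q_e$ (the entries of $B$ are $0$ and $\psi$ times powers of two), and $\ell_e:=\ell^\star$ (``revenue item''); (iii) $q_e$ is first eliminated in some round $\ell^\star$ because $e$ \emph{was not sold} there, so $p^{(\ell^\star)}_e\le q_e$, and $\ell_e:=\ell^\star$ (``buy-back item''). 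The sets $D_\ell:=\{e\in S_i^\star:\ell_e=\ell\}$ partition $S_i^\star$, so subadditivity gives $\sum_\ell v_i(D_\ell)\ge v_i(S_i^\star)$.

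\emph{Extraction round by round.} Let $r^\star,r_i$ now be uniform on $[\beta+1]$, mutually independent and independent of $(r_j)_{j\neq i}$. For each $\ell$ I would consider the event that $\ell$ is the final round \emph{and} $i$ occupies the round that makes $D_\ell$'s items usable --- $\{r^\star=\ell,\ r_i=\ell\}$ for cases (i) and (iii), $\{r^\star=\ell,\ r_i>\ell\}$ for case (ii); each has probability $\ge1/(\beta+1)^2$, and over all $\ell$ and both cases these events are disjoint. Two structural facts drive the extraction. First, when $r_i=\ell$ the round-$\ell$ prices are still the bidder-$i$-free ones (fixed by rounds $<\ell$, or by all of $1,\dots,\beta$ when $\ell=\beta+1$, none containing $i$), and because $\spmech$ processes bidders in order, the bidders preceding $i$ act exactly as in the bidder-$i$-free run; hence every buy-back item of $D_\ell$, unsold there, is still available at $i$'s turn at price $\le q_e$, and (for $\ell=\beta+1$) each survived item is available at price exactly $q_e$ unless already sold to an earlier bidder, in which case its price lands in $\Rev(S_i^\star)$. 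Second, on $\{r^\star=\ell,\ r_i>\ell\}$ round $\ell$ runs exactly as in the bidder-$i$-free run, so each revenue item of $D_\ell$ is sold at price $\ge2q_e$, again landing in $\Rev(S_i^\star)$. Using $v_i(A_i)-\bprice(A_i)\ge v_i(X)-\bprice(X)$ with $X$ the still-available part of $S\cap D_\ell$, averaging over $S\sim\lambda$, summing over $\ell$ via $\sum_\ell v_i(S\cap D_\ell)\ge v_i(S)$, putting the revenue items of $S_i^\star$ into the set $T$ of Observation~\ref{obs:roundedPrices}, and paying for the resulting $2\bq(T)$ with the revenue $\ge2q_e$ of those items (survived items sold to other bidders being handled separately, using their own revenue and the robustness of $\bq$), one arrives at
\[
\E\!\left[\Util_i+\Rev(S_i^\star)\,\middle|\,(r_j)_{j\neq i}\right]\ \ge\ \frac{1}{(\beta+1)^2}\left(\sum_S\lambda_S\big(v_i(S\setminus T)-\bq(S)\big)+2\bq(T)\right),
\]
and the right-hand side is $\ge\frac{1}{2\alpha(\beta+1)^2}v_i(S_i^\star)-\frac{\psi}{m^2}$ by Observation~\ref{obs:roundedPrices}, with a surplus factor $2$ to spare. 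Taking expectations over $(r_j)_{j\neq i}$ and summing over $i$ completes the proof.

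\textbf{Main obstacle.} The crux is reconciling the prices actually charged with the ``intended'' prices $\bq$: this uses the dichotomy that once $q_e$ leaves its candidate interval the interval --- and every later price of $e$ --- stays strictly above $q_e$ if it left by a sale and strictly below if by a non-sale, so a revenue item truly fetches $\ge2q_e$ and a buy-back item is truly offered at $\le q_e$; and it uses that $\spmech$ is order-based, so inserting $i$ into its round cannot destroy the availability of the buy-back items it must demand --- without which, conditioning on $r_i=\ell$ would alter the very binary search that defined $D_\ell$. Making the $\bq$-accounting close (covering $2\bq(T)$ with revenue, properly handling the survived items that end up sold to other bidders, and preventing the additive $\psi/m^2$ slacks from accreting an extra $\log\log m$ factor across the $\beta+1$ rounds and $\le m$ bidders) is the part demanding the most care.
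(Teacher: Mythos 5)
Your proposal is correct and follows essentially the same route as the paper: it reduces to the per-bidder bound $\E[\Util_i+\Rev(S_i^\star)\mid r_{-i}]$, traces each $e\in S_i^\star$ through the bidder-$i$-free run to classify it as surviving/oversold/undersold, charges oversold and already-sold-correct items to revenue and the rest to bidder $i$'s utility on the events $\{r^\star=r_i=\ell\}$ (resp.\ $r_i$ after round $\ell$), and closes via Observation~\ref{obs:roundedPrices}. The one accounting point to fix is your displayed intermediate inequality with the full $2\bq(T)$ credit: correctly priced items sold before $i$'s turn in round $\beta+1$ fetch only $q_e$, not $2q_e$, so the revenue covers only $\bq(T)$ in general; the paper absorbs this by halving the utility term (using $\Util_i\geq 0$), exactly the slack your "surplus factor 2" provides, and with that adjustment your argument yields the stated bound.
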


In \S\ref{sec:end-mech}, we will show that $\psi$ can be assumed to be order $ \max_i v_i(M)$ without loss of generality, which, together with Theorem~\ref{thm:main-mech}, immediately implies the desired $O((\log\!\log{m})^3)$-approximation for subadditive and 
$O((\log\!\log{m})^2)$-approximation for XOS (and submodular) valuations.

\section{The Analysis}

\newcommand{\correctbutsold}{D}


The proof of Theorem~\ref{thm:main-mech} will follow  from the following proposition.

\begin{proposition} \label{prop:main}
Consider any bidder $i \in N$ and the case that $\psi \geq v_i(S_i^\star)$. Let $r_{-i}$ denote a vector containing $r_{i'}$ for all $i' \neq i$. Then, for any choices of $r_{-i}$ we have
\[	\E_{r^\star, r_i}[\Util_i + \Rev(S_i^\star) \mid r_{-i} ] ~\geq~ \frac{1}{2 \cdot \alpha \cdot (\beta+1)^2} v_i(S_i^\star) - \frac{\psi}{m^2} \ ,
\]
where $\Util$ and $\Rev$ are defined for the $\spmech$ corresponding to the final allocation.
\end{proposition}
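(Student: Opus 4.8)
The plan is to first integrate out the uniformly random final round $r^\star$, then reduce the statement to a bound proved by following each optimal item $e\in S_i^\star$ through the $\beta$ binary-search rounds, and finally glue the per-item/per-round contributions together using subadditivity of $v_i$. Throughout, write $\Util_i^{(\ell)}$ and $\Rev^{(\ell)}(\cdot)$ for the utility and revenue in the round-$\ell$ fixed-price auction of the \emph{complete} $(\beta{+}1)$-round run (the early stopping at $r^\star$ does not change rounds $1,\dots,r^\star$, so this is consistent). First I would apply Observation~\ref{obs:roundedPrices} with $U=S_i^\star$ and bidder $i$ (legitimate since $\psi\ge v_i(S_i^\star)$) to obtain a distribution $\lambda$ over $S\subseteq S_i^\star$ and prices $\bq\in B^{|S_i^\star|}$. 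Since $r^\star$ is uniform and independent of everything, and since bidder $i$ appears in only one group so $\Util_i$ in the final outcome is nonzero only when $r^\star=r_i$, we get
\[
\E_{r^\star,r_i}\big[\Util_i+\Rev(S_i^\star)\mid r_{-i}\big]
=\frac{1}{\beta+1}\,\E_{r_i}\Big[\Util_i^{(r_i)}+\textstyle\sum_{\ell=1}^{\beta+1}\Rev^{(\ell)}(S_i^\star)\ \Big|\ r_{-i}\Big],
\]
so it suffices to lower bound the bracketed quantity by $\tfrac{1}{2\alpha(\beta+1)}v_i(S_i^\star)-O(\tfrac{\psi}{m^2})$.

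\textbf{Tracing an item.} Fix a value $\rho$ of $r_i$. Rounds $1,\dots,\rho-1$ do not involve bidder $i$, so the candidate vectors $\mathbf b^{(\ell)}_e$ and prices $p^{(\ell)}_e$ for $\ell\le\rho$ coincide with those of the ``run without bidder $i$'', a function of $r_{-i}$ alone. In that run I would follow, for each $e\in S_i^\star$ with target $q_e$, the nested candidate vectors, obtaining a trichotomy: \emph{(G)} $q_e$ survives all $\beta$ rounds, hence $p^{(\beta+1)}_e=q_e$; \emph{(H)} at the first round $\ell_e$ where $q_e$ is discarded, $e$ was sold, which forces $q_e<p^{(\ell_e)}_e$, and since the candidate set is always a contiguous block of $B$ (consecutive elements differing by a factor $2$) this gives $p^{(\ell_e)}_e\ge 2q_e$ — so $e$ produces revenue $\ge 2q_e$ in round $\ell_e$, and $p^{(\ell)}_e\ge 2q_e$ for every $\ell\ge\ell_e$; \emph{(L)} at the first round $\ell_e$ where $q_e$ is discarded, $e$ was not sold, which forces $p^{(\ell_e)}_e\le q_e$, and since an unsold item's candidate set only moves downward, $p^{(\ell)}_e\le q_e$ for every $\ell\ge\ell_e$.

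\textbf{Assembling.} Fix $\rho=r_i$ and set $T_\rho:=\{e\in S_i^\star: p^{(\rho)}_e>q_e\}\cup\{e\in S_i^\star: e\text{ is taken before }i\text{ in round }\rho\}$. Every $e\notin T_\rho$ has $p^{(\rho)}_e\le q_e$ and is still available, so for $S\sim\lambda$ bidder $i$ may buy $S\setminus T_\rho$, and combining with Observation~\ref{obs:roundedPrices} applied with $T=T_\rho$,
\[
\Util_i^{(\rho)}\ \ge\ \E_{S\sim\lambda}\big[v_i(S\setminus T_\rho)-\bpricei{\rho}(S\setminus T_\rho)\big]\ \ge\ \E_{S\sim\lambda}\big[v_i(S\setminus T_\rho)-\bq(S)\big]\ \ge\ \tfrac1\alpha v_i(S_i^\star)-2\bq(T_\rho)-\tfrac{\psi}{m^2}.
\]
It then remains to charge $2\bq(T_\rho)$ to revenue: an item with $p^{(\rho)}_e>q_e$ is either of type (H) with $\ell_e\le\rho$ (revenue $\ge 2q_e$ in round $\ell_e\le\rho$) or still has $q_e$ alive at round $\rho$; an item taken before $i$ in round $\rho$ contributes $p^{(\rho)}_e$ to $\Rev^{(\rho)}(S_i^\star)$. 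Merging this with the type-(G) items (priced exactly $q_e$ in the round-$(\beta{+}1)$ auction, which is relevant when $r_i=\beta+1$), using subadditivity of $v_i$ to combine the gains bidder $i$ obtains across the relevant rounds, and averaging over $\rho\in[\beta+1]$ (each item is ``caught'' in its relevant round with probability $\tfrac1{\beta+1}$) yields the claimed bound; the extra $2(\beta+1)$ in the denominator is exactly the factor $2$ above together with the averaging over $r_i$.

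\textbf{Main obstacle.} The delicate point is the charging in the assembling step, and it is precisely where the prices of Lemma~\ref{lem:DKL} are weaker than XOS prices: an item $e$ whose target $q_e$ is still alive but sitting in the lower half at round $\rho$ is ``lost'' by bidder $i$ in round $\rho$ yet need not generate any revenue there, so one must argue that over the random choice of $r_i$ such an item is instead accounted for in a later round — ultimately either by reaching case (G)/(L) or by the $\ge 2q_e$ revenue it yields once it is sold above $q_e$. Making this precise requires tracking, for each item, the set of rounds in which it is ``expensive'' versus ``cheap'', matching the factor-$2$ rounding slack from $B$ against the factor-$2$ revenue of case (H), and handling the fact that rounds after $r_i$ in the true run differ from the ``run without $i$'' — all without ever appealing to a per-round ``Learnable/Allocatable'' dichotomy. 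That amortization over all rounds and over the randomness of $r_i$ is the key departure from the analysis of~\cite{AS-FOCS19}.
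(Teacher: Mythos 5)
Your setup (integrating out $r^\star$, tracing each item through the run without bidder $i$, and the trichotomy (G)/(H)/(L), which matches the paper's partition of $S_i^\star$ into correctly-priced, oversold, and undersold items) is sound and close to the paper's. But the assembling step contains a genuine gap, and it is exactly the point you flag as the "main obstacle" without resolving it. Your displayed chain gives $\Util_i^{(\rho)} \geq \frac{1}{\alpha} v_i(S_i^\star) - 2\bq(T_\rho) - \frac{\psi}{m^2}$ for \emph{every} $\rho$, with $T_\rho \supseteq \{e : p^{(\rho)}_e > q_e\}$, and then asks to charge $2\bq(T_\rho)$ against revenue. This cannot work: an item whose correct price $q_e$ is still alive but sits in the lower half at round $\rho$ belongs to $T_\rho$, yet it may end up in case (G) or (L) and thus never be sold at a price comparable to $q_e$ in any round, so $\sum_\ell \Rev^{(\ell)}(S_i^\star)$ can be far smaller than $2\bq(T_\rho)$. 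Such items can constitute essentially all of $S_i^\star$ for a given $\rho$, so the per-round inequality you need is simply false; the value of these items is only recoverable via bidder $i$'s utility in a \emph{different} round (the round $\ell_e$ where they become undersold, or round $\beta+1$ if correctly priced), which your single-round utility bound cannot see.

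The paper's resolution is structurally different at precisely this point. It defines $T$ \emph{independently of the round}: $T = \bigcup_\ell O_\ell \cup D$, where $O_\ell$ are the items first sold above $q_e$ at level $\ell$ and $D$ are the correctly-priced items sold before $i$'s arrival in round $\beta+1$; only these items are charged to revenue, and only under the event $r_i = \beta+1$ (which sidesteps your further difficulty that inserting bidder $i$ into round $\ell_e$ could change whether the item is sold there). For the utility, it does \emph{not} apply Observation~\ref{obs:roundedPrices} per round; instead it collects $v_i(U_\ell \cap S) - \bq(U_\ell \cap S)$ from the event $r_i = r^\star = \ell$ and $v_i((C\setminus D)\cap S) - \bq((C\setminus D)\cap S)$ from $r_i = r^\star = \beta+1$, and uses subadditivity of $v_i$ to glue these into $v_i(S\setminus T) - \bq(S\setminus T)$ — each item is "caught" in exactly one round, which is the amortization you gesture at. Only then is Observation~\ref{obs:roundedPrices} invoked, once, with this global $T$, and the resulting $-2\bq(T)$ term (after the deliberate factor-$\frac{1}{2}$ loss on the utility) cancels exactly against the revenue bound $\frac{1}{(\beta+1)^2}\bq(T)$, with no need for your factor-$2$ revenue claim from consecutive prices in $B$. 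To repair your proof you would have to replace the per-round application of Observation~\ref{obs:roundedPrices} with this round-indexed decomposition of $S\setminus T$; as written, the argument does not go through.
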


The proposition lower bounds for any bidder $i$ the sum of their utility  and the revenue generated from items in $S_i^\star$.  Before proving it, let us finish the proof of Theorem~\ref{thm:main-mech}.

\begin{proof} [Proof of Theorem~\ref{thm:main-mech}]
Note that Proposition~\ref{prop:main} holds for any fixed $r_{-i}$. So, in particular, we can also take the expectation over $r_{-i}$ to bound the unconditional expectation by
\[
\E\Big[ \Util_i + \Rev(S_i^\star) \Big] ~~=~~ \E_{r_{-i}}\E_{r^\star, r_i}\Big[ \Util_i + \Rev(S_i^\star) \;\Big|\; r_{-i} \Big] ~~\geq ~~\frac{1}{2 \cdot \alpha \cdot (\beta+1)^2} v_i(S_i^\star) - \frac{\psi}{m^2} \ .
\]

Since the item sets $S_i^\star$ are pairwise disjoint and because welfare equals total utility plus total revenue, we get
\[
\Welfare ~~=~~ \sum_i \left( \Util_i + \Rev(S_i^\star) \right) ~~\geq~~ \sum_{i: S_i^\star \neq \emptyset} \left( \Util_i + \Rev(S_i^\star) \right) \ ,
\]
where we use non-negativity of utilities. So, in combination, by linearity of expectation, we get
\begin{align*}
\E[\Welfare] ~~& \geq~~ \sum_{i: S_i^\star \neq \emptyset} \E\Big[ \Util_i + \Rev(S_i^\star) \Big]  \\
& \geq~~ \sum_{i: S_i^\star \neq \emptyset} \left( \frac{1}{2 \cdot \alpha \cdot (\beta+1)^2} v_i(S_i^\star) - \frac{\psi}{m^2} \right) ~~\geq~~ \frac{1}{2 \cdot \alpha \cdot (\beta+1)^2} \sum_i v_i(S_i^\star) - \frac{\psi}{m}\ ,
\end{align*}
where last inequality is because there are at most $m$ choices of $i$ for which $S_i^\star \neq \emptyset$. 
\end{proof}

\begin{proof} [Proof of Proposition~\ref{prop:main}]
For $e\in S_i^\star$, let $q_e$ denote the item prices given in Observation~\ref{obs:roundedPrices}, which we think of as the \emph{correct} price for $e$. 

Let us observe what happens when we run the $\mech$ on all bidders except bidder $i$. Recall that we have already conditioned on arbitrary values of $r_{-i}$. Further, if we condition  on $r^\star$, there is no more randomness in the decisions of $\mech{(N \setminus \{i\},M,\psi)}$.

For every item $e \in S_i^\star$, there are three possible scenarios depending on what happens to $e$ in the execution of $\mech{(N \setminus \{i\},M,\psi)}$, where we imagine running all the rounds even after the final allocation (alternately, think of what happens to $e$ for $ r^\star = \beta+1$). 
\begin{enumerate}[label=(\alph*), itemsep=0pt, topsep=0pt, leftmargin=20pt]
\item \label{case:Correct} $p^{(\beta+1)}_e = q_e$. That is, item $e$ is eventually offered at the correct price. 

\item \label{case:Oversold} $p^{(\beta+1)}_e > q_e$. That is, there is a level $\ell \in [\beta]$ for which $e$ is allocated in $A_\ell$ although $p^{(\ell)}_e > q_e$.

\item \label{case:Undersold} $p^{(\beta+1)}_e < q_e$. That is, there is a level $\ell \in [\beta]$ for which $e$ is not allocated in $A_\ell$ although $p^{(\ell)}_e \leq q_e$.
\end{enumerate}
We partition the items $S_i^\star$ based on which scenario takes place as follows.

Let $C \subseteq S_i^\star$ denote the set for which Scenario~\ref{case:Correct} takes place, i.e., these items are eventually \emph{correctly} priced with $p_e^{(\beta+1)} = q_e$. Among them, let $\correctbutsold \subseteq C$ be the items that are sold \emph{before} the arrival of bidder $i$ in $\spmech(N_{\beta+1},M,\bpricei{\beta+1})$, assuming $r_i = \beta+1 $. 

For the remaining items $S_i^\star \setminus C$, we partition them based on these scenarios and at which level the first ``incorrect'' event happened. Let $O_\ell \subseteq S_i^\star$ denote the set of \emph{oversold} items which are sold at a higher price in  Scenario~\ref{case:Oversold},  where $\ell$ is the smallest number such that $e$ is allocated in $A_\ell$ although $p^{(\ell)}_e > q_e$. 
Analogously, let $U_\ell \subseteq S_i^\star$ denote the set of \emph{undersold} items $e$ that in Scenario~\ref{case:Undersold} in level $\ell$. 
Note that all of the above sets are disjoint and their union is $S_i^\star$, i.e., $S_i^\star = \cupdot_{\ell=1}^\beta ( U_\ell  \cupdot O_\ell)\cupdot C$.

The general structure of the proof is as follows. Let $T = \cupdot_{\ell=1}^\beta O_\ell \cupdot \correctbutsold$ denote the items that are sold on some level at a sufficiently high price. So, we will use their contribution to the revenue. For the remaining items in $S_i^\star \setminus T$ in contrast, we will argue that bidder $i$ could buy them at some point (depending on randomness $r_i$ and $r^\star$). So, we will use their contribution to bidder $i$'s utility.

We first lower bound the expected revenue $\E[\Rev(S_i^\star)]$.

\begin{claim}\label{claim:RevLB}
For any choice of $r_{-i}$, we have
\[
\E_{r^\star, r_i}[\Rev(S_i^\star) \mid r_{-i}] ~\geq~ \frac{1}{(\beta+1)^2} \cdot \bq(T)\ .
\]
\end{claim}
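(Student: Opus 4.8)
The plan is to show that each item $e \in T$ contributes at least $\frac{1}{(\beta+1)^2} q_e$ to the expected revenue, and then sum over $e \in T$ using linearity of expectation. Recall $T = \cupdot_{\ell=1}^\beta O_\ell \cupdot D$. The key point is that $T$ and its decomposition are determined entirely by $r_{-i}$ and by running $\mech$ on $N \setminus \{i\}$ through all $\beta$ rounds — they do \emph{not} depend on $r^\star$ or $r_i$ (the first $\beta$ fixed-price auctions on $N \setminus \{i\}$ are unaffected by whether $i$ shows up late, and we imagine running all rounds regardless of $r^\star$). So for a fixed $r_{-i}$, each membership ``$e \in O_\ell$'' or ``$e \in D$'' is a deterministic fact, and I only need to lower-bound, for such an $e$, the probability that $e$ is sold at a price $\geq q_e$ in the final allocation.

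First I would handle $e \in O_\ell$. By definition of $O_\ell$, when we run $\mech(N\setminus\{i\}, M, \psi)$, item $e$ is allocated in $A_\ell$ at price $p^{(\ell)}_e > q_e$ (in fact $\geq q_e$; since prices are powers of $2$ and $q_e \in B$, $p^{(\ell)}_e > q_e$ gives $p^{(\ell)}_e \geq 2 q_e \geq q_e$). The subtlety is that the \emph{final} allocation returned by $\mech(N, M, \psi)$ — which includes bidder $i$ — is $A^{(\ell)}$ only when $r^\star = \ell$, and when $r_i = \ell$ bidder $i$ actually participates in round $\ell$, perturbing $A^{(\ell)}$. So I condition on the event $\{r^\star = \ell\} \cap \{r_i \neq \ell\}$, which has probability $\frac{1}{\beta+1} \cdot \frac{\beta}{\beta+1} \geq \frac{1}{\beta+1}\cdot\frac{1}{2}$... actually I'd rather be cleaner: condition on $\{r^\star = \ell, r_i = \ell'\}$ for some $\ell' \neq \ell$. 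On this event the round-$\ell$ auction runs exactly on $N_\ell \subseteq N \setminus\{i\}$ as in the $\mech(N\setminus\{i\})$ execution, so $e$ is sold at $p^{(\ell)}_e \geq q_e$, and the final allocation is $A^{(\ell)}$. Hence $\Rev(\{e\}) \geq q_e$ on this event. The probability is $\Pr[r^\star = \ell]\cdot\Pr[r_i \neq \ell] = \frac{1}{\beta+1}\cdot\frac{\beta}{\beta+1} \geq \frac{1}{(\beta+1)^2}$ (using $\beta \geq 1$), giving $\E[\Rev(\{e\})] \geq \frac{q_e}{(\beta+1)^2}$.

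Next, for $e \in D \subseteq C$: here $p^{(\beta+1)}_e = q_e$, and by definition of $D$ the item $e$ is sold before bidder $i$'s arrival in $\spmech(N_{\beta+1}, M, \bpricei{\beta+1})$ when $r_i = \beta+1$. I condition on $\{r^\star = \beta+1, r_i = \beta+1\}$, which has probability $\frac{1}{(\beta+1)^2}$. On this event the final allocation is the round-$(\beta+1)$ auction with $i$ inserted in $N$-order; since $e$ was sold to someone appearing before $i$, that buyer still buys $e$ (bidder $i$'s presence cannot undo an earlier sale), so $e$ is sold at price $q_e$ and $\Rev(\{e\}) \geq q_e$. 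Thus $\E[\Rev(\{e\})] \geq \frac{q_e}{(\beta+1)^2}$ again. Summing both cases over all $e \in T$ and noting the conditioning events for distinct $e$'s need not be disjoint but revenue is a sum of nonnegative per-item terms (so linearity of expectation applies directly to $\Rev(S_i^\star) \geq \Rev(T) = \sum_{e\in T}\Rev(\{e\})$), we get $\E_{r^\star, r_i}[\Rev(S_i^\star)\mid r_{-i}] \geq \sum_{e \in T}\frac{q_e}{(\beta+1)^2} = \frac{\bq(T)}{(\beta+1)^2}$.

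The main obstacle I anticipate is bookkeeping around how inserting bidder $i$ into round $\ell$ or round $\beta+1$ changes the $\spmech$ outcome, and making sure the events I condition on genuinely freeze the relevant auction to match the $\mech(N\setminus\{i\})$ execution — in particular, for $e \in O_\ell$ I must be careful that conditioning only on $r_i \neq \ell$ (not on all of $r_{-i}$, which is already fixed) suffices to guarantee the round-$\ell$ auction and hence $A^{(\ell)}$ is exactly as analyzed, and that $e \in O_\ell$ really does force a sale at price $\geq q_e$ even though $A^{(\ell)}$ itself might differ from the ``all-rounds'' execution — but it does not, since rounds $1,\dots,\ell$ of $\mech(N\setminus\{i\})$ depend only on $r_{-i}$. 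A secondary subtlety is the edge case $\beta+1$: I should double-check the probability bound $\frac{\beta}{(\beta+1)^2} \geq \frac{1}{(\beta+1)^2}$ and that $q_e \leq p^{(\ell)}_e$ follows from $q_e < p^{(\ell)}_e$ via the power-of-$2$ structure, or simply carry $p^{(\ell)}_e \geq q_e$ throughout since $\geq$ is all we need for the revenue bound.
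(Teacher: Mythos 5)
There is a genuine gap in your treatment of the oversold items. For $e \in O_\ell$ you condition on the event $\{r^\star = \ell,\ r_i = \ell'\}$ with $\ell' \neq \ell$ and assert that the round-$\ell$ auction then coincides with the one in the execution of $\mech(N\setminus\{i\}, M, \psi)$, justifying this by saying that rounds $1,\dots,\ell$ ``depend only on $r_{-i}$.'' That justification is false when $\ell' < \ell$: in that case bidder $i$ participates in round $\ell'$, may purchase items there, and thereby changes $A^{(\ell')}$, hence the price updates $\mathbf{b}^{(\ell'+1)}$, hence the price trajectory of $e$ and the demand sets of every later bidder. So in $\mech(N,M,\psi)$ with $r_i < \ell$, item $e$ need not be sold in round $\ell$ at all, let alone at a price $\geq q_e$; the sets $O_\ell$ were defined with respect to the execution on $N\setminus\{i\}$, and that execution is only reproduced in rounds $1,\dots,\ell$ when bidder $i$ has not yet appeared, i.e.\ when $r_i > \ell$. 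This is exactly why the paper conditions on $r_i = \beta+1$ (so that \emph{all} of rounds $1,\dots,\beta$ are untouched by $i$) rather than merely $r_i \neq \ell$.

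The gap is easily repaired and the bound survives: replace your event by $\{r^\star = \ell,\ r_i > \ell\}$, which has probability $\frac{1}{\beta+1}\cdot\frac{\beta+1-\ell}{\beta+1} \geq \frac{1}{(\beta+1)^2}$ for $\ell \leq \beta$, or simply use $\{r^\star = \ell,\ r_i = \beta+1\}$ with probability exactly $\frac{1}{(\beta+1)^2}$ as the paper does. Your handling of $e \in \correctbutsold$ (conditioning on $r_i = r^\star = \beta+1$) is correct, and your per-item accounting via $\Rev(T) = \sum_{e\in T}\Rev(\{e\})$ is a harmless repackaging of the paper's per-$(r_i,r^\star)$ case analysis; with the corrected conditioning the two arguments yield the same constant.
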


\begin{proof}
We will lower-bound the revenue for all possible values of $r_i$ and $r^\star$. If $r_i \neq \beta+1$, we use the trivial bound of $\Rev(S_i^\star) = 0$.

Next, if $r_i = \beta+1$ and $r^\star = \ell$ for any $\ell < \beta+1$, we know that that $A_\ell$ is the final allocation. Moreover, 
the set $O_{\ell}$ defined in $\mech{(N \setminus \{i\},M,\psi)}$ remains the same even in the actual mechanism $\mech{(N,M,\psi)}$, as in the latter, bidder $i$ ``appears'' after round $\ell$. 
In this allocation, each item $e \in O_\ell$ is sold at a price of at least $q_e$. So $\Rev(S_i^\star) \geq \bq(O_\ell)$. 

Finally, if $r_i = \beta+1 = r^\star$, the  allocation is $A_{\beta+1}$. Here, each item  $e \in \correctbutsold$ is sold at a price at least $q_e$, the same way in 
$\mech{(N \setminus \{i\},M,\psi)}$ and $\mech{(N,M,\psi)}$, as in the latter, bidder $i$ ``appears'' after the items in $\correctbutsold$ are sold. 
So $\Rev(S_i^\star) \geq \bq(\correctbutsold)$. 

Since every combination of $r_i$ and $r^\star$ happens with probability $\frac{1}{(\beta+1)^2}$, we have
\begin{align*}	\label{eq:revLB}
\E_{r^\star, r_i}[\Rev(S_i^\star) \mid r_{-i}] ~~\geq ~~ \frac{1}{(\beta+1)^2} \Big(  \sum_{\ell=1}^\beta \bq(O_\ell) + \bq(\correctbutsold) \Big) ~~=~~ \frac{1}{(\beta+1)^2} \cdot \bq(T) \ ,
\end{align*}
where the last equality uses that $T$ is defined to be $\cupdot_{\ell=1}^\beta O_\ell \cupdot \correctbutsold$.
\end{proof}
 
 Next we lower bound bidder $i$'s utility for an arbitrary set $S \subseteq S_i^\star$ (the reason we bound this for all subsets of $S_i^\star$ and not only $S_i^\star$ itself is because the prices in Observation~\ref{obs:roundedPrices} and Lemma~\ref{lem:DKL} are accompanied 
 by a distribution $\lambda$ which may put different masses on different subsets of $S_i^\star$).
 
\begin{claim} \label{claim:UtilLB}
For any choice of $r_{-i}$, and any arbitrary set $S \subseteq S_i^\star$, we have
\[
\E_{r^\star, r_i}[\Util_i \mid r_{-i}] ~\geq~ \frac{1}{(\beta+1)^2} \cdot \Big(   v_i(  S \setminus T) - \bq( S \setminus T) \Big) \ .
\]
\end{claim}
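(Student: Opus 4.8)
The plan is to exhibit, for every pair $(r_i, r^\star)$ with $r_i = r^\star = \ell$ and $\ell \in [\beta+1]$, a candidate bundle that bidder $i$ could have purchased in the $\spmech$ of round $\ell$, and then take the expectation over the $\frac{1}{(\beta+1)^2}$-probability event $\{r_i = r^\star = \ell\}$. For all other pairs, we will simply use $\Util_i \geq 0$ (bidders never get negative utility in a $\spmech$ since they can always take $\emptyset$). The natural candidate bundle is governed by the distribution $\lambda$ over subsets of $S_i^\star$ from Observation~\ref{obs:roundedPrices} (applied with $U = S_i^\star$, which is legitimate since $\psi \geq v_i(S_i^\star)$); this is why the claim is phrased for an arbitrary $S \subseteq S_i^\star$ rather than just $S_i^\star$ — we will ultimately integrate the per-$S$ bound against $\lambda$, but that averaging step is deferred to the assembly of the proposition, and here we only need the bound for each fixed $S$.

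The key point to establish is: \emph{whenever bidder $i$ arrives in round $\ell$ (i.e. $r_i = \ell$), the set of items in $S_i^\star \setminus T$ that are still available when it is her turn includes all of $S \setminus T$ restricted to those not yet gone — and in fact every item of $S_i^\star \setminus T$ is available and priced at \emph{most} $q_e$.} For items in $C \setminus D$ (correctly priced, not sold before $i$) this holds in round $\beta+1$ by definition of $D$, with price exactly $q_e$. For an undersold item $e \in U_{\ell'}$: in round $\ell'$ it was offered at price $p^{(\ell')}_e \leq q_e$ and \emph{not} bought, so the binary search moves to the lower half, and $e$'s price only decreases further; moreover $e$ was not sold in round $\ell'$, and one argues it stays unsold in subsequent rounds among $N \setminus \{i\}$ up to the round bidder $i$ participates in — so at $i$'s arrival $e$ is available at a price $\leq q_e$. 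For items in $O_{\ell'}$ or $D$ (i.e. in $T$) we make no availability claim, which is exactly why the bound only references $S \setminus T$. Because in the real mechanism $\mech(N,M,\psi)$ bidder $i$ appears in round $r_i$ only, and all rounds $< r_i$ run identically to $\mech(N \setminus \{i\}, M, \psi)$, the sets $T$, $U_{\ell'}$, etc., computed in the $i$-free execution correctly describe the state $i$ sees. Hence when $i$ arrives, the bundle $(S \setminus T)$ — or rather the subset of it still present — is purchasable, and since each such item costs at most $q_e$, by monotonicity and the demand-query (utility-maximizing) choice of $A_i$ we get $\Util_i = v_i(A_i) - \bprice(A_i) \geq v_i(S \setminus T) - \bq(S \setminus T)$; if some items of $S\setminus T$ were already taken that only helps, since $v_i$ is subadditive/monotone and those items contributed at most $\bq$ to the lower bound. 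When additionally $r^\star = \ell = r_i$, this round's allocation is the final one, so this $\Util_i$ is the one in the claim.

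The main obstacle I expect is the availability argument for undersold items across \emph{multiple} rounds: I must check that an item $e \in U_{\ell'}$ with $\ell' < r_i$ is not snapped up by some bidder in an intermediate round $\ell''$ with $\ell' < \ell'' < r_i$ at its (now even lower) price. The clean way around this is to note that if $e$ \emph{were} bought in some round $\ell'' \leq \beta$ at price $p^{(\ell'')}_e \leq q_e$, then by the scenario bookkeeping $e$ would have been reclassified — it would land in some $O_{\ell''}$ or in $D$ — hence would belong to $T$, contradicting $e \in S \setminus T$. Thus the definition of $T$ already absorbs every "bad" outcome, and items of $S_i^\star \setminus T$ are, by construction, precisely those that remain available and cheaply priced for $i$; making this implication airtight (tracking that "sold at price $\le q_e$ in some round" $\Rightarrow$ "in $T$") is the crux. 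Once that is in hand, summing $\frac{1}{(\beta+1)^2}\big(v_i(S\setminus T) - \bq(S\setminus T)\big)$ over the single surviving $(r_i,r^\star)$ pair and discarding the nonnegative contributions of all other pairs finishes the claim.
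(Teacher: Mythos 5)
There is a genuine gap, and it sits exactly at the step you yourself flag as the crux. Your plan is to show that in the single round $r_i=r^\star=\ell$ bidder $i$ can purchase \emph{all} of $S\setminus T$ at prices at most $q_e$. This availability-and-pricing claim is false in both directions. First, for an item $e\in U_{\ell'}$ with $\ell'>\ell$ (or an item in $C$), the binary search at round $\ell$ has not yet made any ``incorrect'' move, so $p^{(\ell)}_e$ may well exceed $q_e$ (a correct trajectory is allowed to leave the item unsold at a price above $q_e$ in early rounds); such an item is in $S\setminus T$ but is not offered to bidder $i$ at price $\leq q_e$ in round $\ell$. Second, for $e\in U_{\ell'}$ with $\ell'<\ell$, your proposed rescue---that if $e$ were bought in an intermediate round $\ell''$ it would be reclassified into some $O_{\ell''}$ or into $D$, hence into $T$---does not work: $O_{\ell''}$ contains only items whose \emph{first} incorrect event is being sold at a price \emph{strictly above} $q_e$, whereas after the undersold event at level $\ell'$ the remaining candidate prices are all \emph{below} $q_e$, so a subsequent sale happens at a price below $q_e$ and leaves $e$ in $U_{\ell'}\subseteq S\setminus T$ while making it unavailable to bidder $i$. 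Your fallback remark that losing already-taken items ``only helps'' is also incorrect: dropping an item from the bundle removes $v_i(\cdot)$ of value but only $\bq(\cdot)$ of price, and the former can be much larger.

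The paper's proof avoids all of this by making a much weaker per-round claim: in round $\ell<\beta+1$ bidder $i$ can buy only $U_\ell\cap S$ (items whose first incorrect event---being left unsold at a price $\leq q_e$---occurs precisely at level $\ell$, hence are certainly available and cheap at that moment), and in round $\beta+1$ only $(C\setminus D)\cap S$. Each such event has probability $\tfrac{1}{(\beta+1)^2}$, and the pieces are then recombined via subadditivity, $\sum_{\ell} v_i(U_\ell\cap S)+v_i((C\cap S)\setminus D)\geq v_i(S\setminus T)$, while the prices add up exactly. Your single-round argument, if it worked, would even overshoot the claimed bound; the fact that the claim only carries a $\tfrac{1}{(\beta+1)^2}$ prefactor (rather than $\tfrac{1}{\beta+1}$ from summing over all diagonal pairs) is a hint that the intended argument charges each piece of $S\setminus T$ to a \emph{different} round.
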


\begin{proof}
To lower-bound the LHS, we will again consider all possible choices of $r_i$ and $r^\star$. First note that if $r_i \neq r^\star$, then $\Util_i = 0$ because bidder $i$ does not get any items in the final allocation.

Next, consider the case where $r_i = r^\star= \ell < \beta + 1$. In this case, bidder $i$ can purchase items in the set $U_\ell \cap S$ by paying price at most $\bq(U_\ell \cap S)$ since these items are not sold in this round in the absence of bidder $i$. Thus, in this case, the utility is lower-bounded by
\[ \Util_i ~\geq~ v_i(U_\ell \cap S) - \bq(U_\ell \cap S) \ .
\]

Now consider the case where $r_i = \beta+1 = r^\star$. Since bidder $i$ can purchase items in the set $(C \setminus \correctbutsold) \cap S$ by paying prices at most $\bq((C \setminus \correctbutsold) \cap S)$, the utility is this case is at least
\[ \Util_i ~\geq ~ v_i ((C \setminus \correctbutsold) \cap S) - \bq((C \setminus \correctbutsold) \cap S) \ .
\]

Any combination of $r_i$ and $r^\star$ happens with probability $\frac{1}{(\beta+1)^2}$. Therefore, we get
\begin{align*} \E_{r^\star, r_i}[\Util_i \mid r_{-i}]  &\geq \frac{1}{(\beta+1)^2} \Big(  \sum_{\ell=1}^{\beta} v_i(U_\ell \cap S)  +v_i ((C \cap S) \setminus \correctbutsold) - \bq((C \cap S) \setminus \correctbutsold) - \sum_{\ell=1}^{\beta} \bq(U_\ell \cap S) \Big) \\
&\geq \frac{1}{(\beta+1)^2} \Big(   v_i( S \setminus T) - \bq( S \setminus T) \Big) \ ,
\end{align*}
where the last inequality uses subadditivity of $v_i$ and that $S_i^\star = \cupdot_{\ell=1}^\beta ( U_\ell  \cupdot O_\ell)\cupdot C$.
\end{proof}

Since $\Util_i \geq 0$ pointwise, we have $\E_{r^\star, r_i}[\Util_i \mid r_{-i}] \geq \frac{1}{2} \E_{r^\star, r_i}[\Util_i \mid r_{-i}]$ and so an immediate implication of Claim~\ref{claim:UtilLB} is that 
$\E_{r^\star, r_i}[\Util_i \mid r_{-i}] \geq \frac{1}{2 \cdot (\beta+1)^2} \Big(   v_i(  S \setminus T) - \bq( S \setminus T) \Big)$. Combining this with Claim~\ref{claim:RevLB}, we can lower bound the expected sum of revenue and utility by
\begin{align}	 \label{eq:RevAndUtil}
\E_{r^\star, r_i}[\Util_i + \Rev(S_i^\star) \mid r_{-i}]  ~\geq~ \frac{1}{2 \cdot (\beta+1)^2} \Big(  v_i( S \setminus T) - \bq( S \setminus T) \Big) +   \frac{\bq(T)} {(\beta+1)^2} \ ,
\end{align}
for an arbitrary subset of  items $S \subseteq S_i^\star$. Finally, we use Observation~\ref{obs:roundedPrices} to get
\[ \sum_{S \subseteq S_i^\star} \lambda_S \Big( v_i(S\setminus T) - \bq(S \setminus T) \Big) ~\geq ~  \sum_{S \subseteq S_i^\star} \lambda_S \Big( v_i(S\setminus T) - \bq(S) \Big) ~\geq~ \frac{1}{\alpha} \cdot v_i(S_i^\star) - 2 \cdot \bq(T) - \frac{\psi}{m^2} \ .
\]
Combining this with Eq.~\eqref{eq:RevAndUtil} and using $\sum_{S\subseteq S_i^\star} \lambda_S \leq 1$, we get
\begin{align*}
\E_{r^\star, r_i}[\Util_i + \Rev(S_i^\star) \mid r_{-i}] ~&\geq~ \frac{1}{2 \cdot (\beta+1)^2} \sum_{S \subseteq S_i^\star} \lambda_S\Big(  v_i(  S \setminus T) - \bq( S \setminus T) \Big)  + \frac{ \bq(T)}{(\beta+1)^2} \\
&\geq~ \frac{1}{2 \cdot (\beta+1)^2}\cdot  \paren{\frac{1}{\alpha} \cdot v_i(S_i^\star) - 2 \cdot \bq(T) - \frac{\psi}{m^2}} + \frac{ \bq(T)}{(\beta+1)^2} \\
&\geq~ \frac{1}{2 \cdot (\beta+1)^2}\cdot  \frac{1}{\alpha} \cdot v_i(S_i^\star) - \frac{\psi}{m^2} \ ,
\end{align*}
which completes the proof of this proposition.
\end{proof}


\newcommand{\Nsecprice}{\ensuremath{N_{\textnormal{\textsf{second price}}}}}
\newcommand{\Nstat}{\ensuremath{N_{\textnormal{\textsf{stat}}}}}
\newcommand{\OPTstat}{\ensuremath{\OPT_{\textnormal{\textsf{stat}}}}}
\newcommand{\ALGstat}{\ensuremath{\alg_{\textnormal{\textsf{stat}}}}}
\newcommand{\Nmech}{\ensuremath{N_{\textnormal{\textsf{mech}}}}}
\newcommand{\OPTmech}{\ensuremath{\OPT_{\textnormal{\textsf{mech}}}}}
\newcommand{\ALGmech}{\ensuremath{\alg_{\textnormal{\textsf{mech}}}}}

\newcommand{\fmech}{\ensuremath{\textnormal{\textsf{FinalMechanism}}}\xspace}

\section{Removing the Extra Assumption}\label{sec:end-mech}

We now remove the assumption on the knowledge of an upper bound $\psi$ for $\max_{i \in N} v_i(M)$ and obtain our final mechanism. 
Simply asking the bidders for this value would lead to complex incentive issues, and thus, we instead apply  a common sampling approach~\cite{DobzinskiNS06,Dobzinski07,Dobzinski16,AS-FOCS19}.

\begin{theorem} \label{thm:finalMain}
	There is a universally truthful mechanism for combinatorial auctions with at most $n$ demand and value queries and polynomial time that achieves an 
	$O((\log\!\log{m})^3)$-approximation for subadditive and $O((\log\!\log{m})^2)$-approximation for XOS (and submodular) bidders, in expectation. 
\end{theorem}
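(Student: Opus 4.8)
The plan is to reduce the general case to the setting of Theorem~\ref{thm:main-mech} by "guessing" the scale $\psi$ of $\max_i v_i(M)$ via sampling, in the standard way used in~\cite{DobzinskiNS06,Dobzinski07,Dobzinski16,AS-FOCS19}. First I would split the bidders randomly into two groups: a small \emph{statistics} group $\Nstat$ (say each bidder independently placed there with constant probability, or a single random bidder) and the remaining \emph{mechanism} group $\Nmech$. From the statistics group I extract an estimate $\psi$ of $\max_{i \in N} v_i(M)$ using value queries only — e.g.\ query $v_i(M)$ for $i \in \Nstat$ and set $\psi$ to (a constant times) the maximum observed value, or run a trivial single-bidder/second-price style allocation on $\Nstat$ to both estimate the scale and hedge against the case where the top bidder landed in $\Nstat$. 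With constant probability the bidder $i^\star$ attaining $\max_i v_i(M)$ lands in $\Nmech$ while some bidder with value $\Theta(\max_i v_i(M))$ lands in $\Nstat$, so that $\psi = \Theta(\max_i v_i(M)) \ge \max_{i \in \Nmech} v_i(M)$; conditioned on this good event I run $\mech(\Nmech, M, \psi)$ and invoke Theorem~\ref{thm:main-mech}.

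The key steps, in order: (i) define the sampling split and the estimator $\psi$, and argue truthfulness — the statistics bidders are allocated nothing (or allocated by a separate truthful rule) so they have no incentive to misreport, and the mechanism bidders face $\mech$, which is truthful and in which $\psi$ does not depend on any $\Nmech$ bidder's report; hence the whole thing is universally truthful. (ii) Condition on the good scaling event $E$ (probability $\Omega(1)$) and observe that $\OPT(\Nmech) \ge \Omega(1)\cdot \OPT(N)$ with constant probability as well, since removing a random constant fraction of bidders loses only a constant factor of the optimal welfare in expectation — formalized by the standard argument that $\Exp{}{\OPT(\Nmech)} \ge c\cdot \OPT(N)$. (iii) Apply Theorem~\ref{thm:main-mech} with the group $\Nmech$ and this $\psi$: since $\psi \ge \max_{i\in\Nmech} v_i(M) \ge v_i(S_i^\star)$ for all $i$, we get expected welfare at least $\frac{1}{2\alpha(\beta+1)^2}\sum_{i\in\Nmech} v_i(S_i^{\star,\Nmech}) - \psi/m$, and since $\psi = O(\max_i v_i(M)) = O(\OPT(N))$ the additive $\psi/m$ term is absorbed. (iv) Chain the constant-factor losses from sampling with the $\frac{1}{2\alpha(\beta+1)^2}$ factor and plug in $\alpha = O(\log\log m)$ (subadditive) or $O(1)$ (XOS), $\beta = O(\log\log m)$, to conclude an $O((\log\log m)^3)$- resp.\ $O((\log\log m)^2)$-approximation. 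Finally, note the query budget: $\Nstat$ uses $|\Nstat|$ value queries and $\Nmech$ uses $|\Nmech|$ demand queries, totaling at most $n$ queries overall, matching the claimed bound.

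The main obstacle I expect is handling the corner cases in the scale estimation cleanly while keeping the query count at exactly $n$ and preserving truthfulness simultaneously. Concretely: if the single highest bidder is the \emph{only} bidder with non-negligible value, putting them in $\Nstat$ wastes the entire optimum, so the estimator must be robust — one standard fix is to run, on the statistics group, a trivial mechanism (e.g.\ "give all items to the highest statistics bidder at a second-price-type payment") and take the better of that outcome and the outcome of $\mech(\Nmech,M,\psi)$; but "taking the better of two truthful mechanisms" is not automatically truthful, so one instead flips a coin to run one or the other, which is universally truthful and loses only another constant factor. I would also need to double-check the edge case where $\psi$ is chosen \emph{too small} (the good event fails): there Theorem~\ref{thm:main-mech}'s guarantee degrades, but since we only need it to hold on an $\Omega(1)$-probability event and welfare is always non-negative, the expectation bound survives. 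None of these steps is deep — they are the routine "remove the promise via sampling" boilerplate — but assembling them so that the query count, truthfulness, and the constant factors all line up is where the care is needed.
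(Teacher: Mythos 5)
Your proposal is correct and is essentially the paper's construction: a random split into a statistics group and a mechanism group, $\psi$ set to the maximum $v_i(M)$ over the statistics group, a grand-bundle second-price auction on the statistics group as a hedge, a coin flip (rather than ``take the better of'') to preserve universal truthfulness, and one value or demand query per bidder. The only wrinkle is your step (ii): the event that $i^\star$ lands in $\Nmech$ \emph{and} some bidder of comparable value lands in $\Nstat$ need not have constant probability (e.g.\ when one bidder dominates), so the clean analysis instead conditions on $i^\star \in \Nstat$ --- which makes $\psi = \max_i v_i(M)$ an exact upper bound for everyone in $\Nmech$ and lets the second-price branch recover $v_{i^\star}(M)$ and absorb the additive $\psi/m$ term --- exactly the patch you yourself describe in your final paragraph, and exactly what the paper does.
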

\begin{proof}
	The mechanism is as follows. We first independently add each bidder to the set $\Nsecprice$ with probability $\frac{1}{2}$. Then, we run a second-price auction, in which the grand bundle as a whole is assigned to the bidder in $\Nsecprice$ whose valuation $v_i(M)$ is highest (using value queries). With probability $\frac{1}{2}$, this is the final allocation. Otherwise, we let $\psi := \max_{i \in \Nsecprice} v_i(M)$, and run $\mech{(N \setminus \Nsecprice,M,\psi)}$ and allocate the items to
	bidders in $N \setminus \Nsecprice$ accordingly. This mechanism is  universally truthful and uses for each bidder only one demand or value query and polynomial time. Thus, it only remains to analyze its approximation ratio.

Let $i_{\mathrm{max}}$ be a bidder $i$ whose valuation for the grand bundle $v_i(M)$ is maximum. In the following, we condition on $i_{\mathrm{max}} \in \Nsecprice$, which happens with probability $\frac{1}{2}$. If $i_{\mathrm{max}} \not\in \Nsecprice$, we use that the social welfare is non-negative.

Due to our conditioning, the social welfare obtained in the second-price auction is always $\max_i v_i(M) = v_{i_{\mathrm{max}}}(M)$. Besides, we always have $\psi = \max_i v_i(M)$. Therefore, by Theorem~\ref{thm:main-mech}, the expected social welfare obtained by $\mech{(N \setminus \Nsecprice,M,\psi)}$ for a fixed $\Nsecprice$ (which contains $i_{\mathrm{max}}$) is at least
\[
\sum_{i \in N \setminus \Nsecprice} \frac{1}{2 \cdot \alpha \cdot (\beta+1)^2} \cdot v_i(S_i^\star) - \frac{\psi}{m} ~~=~~ \sum_{i \in N \setminus \Nsecprice} \frac{1}{2 \cdot \alpha \cdot (\beta+1)^2} \cdot v_i(S_i^\star) - \frac{v_{i_{\mathrm{max}}}(M)}{m} \enspace.
\]

Even when conditioning on $i_{\mathrm{max}} \in \Nsecprice$, every $i \neq i_{\mathrm{max}}$ is contained in $\Nsecprice$ with probability $\frac{1}{2}$. Therefore, taking the conditional expectation over $\Nsecprice$, we get the following lower bound on the expected social welfare obtained by $\mech$:
\[
\frac{1}{4 \cdot \alpha \cdot (\beta+1)^2} \sum_{i \in N \setminus \{i_{\mathrm{max}}\}} v_i(S_i^\star) - \frac{v_{i_{\mathrm{max}}}(M)}{m} \enspace.
\]

Finally, the second-price auction and $\mech$ are each run with probability $\frac{1}{2}$. So, still conditioning on $i_{\mathrm{max}} \in \Nsecprice$, the expected welfare of the final mechanism is at least
\[
\frac{1}{2} v_{i_{\mathrm{max}}}(M) + \frac{1}{2} \left( \frac{1}{4 \cdot \alpha \cdot (\beta+1)^2} \sum_{i \in N \setminus \{i_{\mathrm{max}}\}} v_i(S_i^\star) - \frac{v_{i_{\mathrm{max}}}(M)}{m} \right) ~~\geq~~ \frac{1}{8 \cdot \alpha \cdot (\beta+1)^2} \sum_{i \in N} v_i(S_i^\star) \enspace.
\]
As $i_{\mathrm{max}} \in \Nsecprice$ with probability $\frac{1}{2}$, the expected welfare of the mechanism is at least half the above bound. 
As $\beta = O(\log\!\log{m})$ and $\alpha = O(\log\!\log{m})$ for subadditive and $\alpha = O(1)$ for XOS bidders, we obtain the final bound in the theorem. 
\end{proof}


\section{Concluding Remarks and Open Problems}\label{sec:conc}

We gave a computationally-efficient and universally truthful mechanism for subadditive combinatorial auctions  that achieves an $O((\log\!\log{m})^3)$-approximation in expectation.  
This breaks the longstanding logarithmic barrier on the approximation ratio of such mechanisms for subadditive bidders dating back to the $O(\log{m} \cdot \log\!\log{m})$ approximation
mechanism of~\cite{Dobzinski07}. Along the way, we also considerably simplified the previously best mechanism of~\cite{AS-FOCS19} for submodular (and XOS) bidders, and improved its approximation ratio from $O((\log\!\log{m})^3)$ to $O((\log\!\log{m})^2)$. 

The obvious question left open by our work is whether our bounds  can be  improved further.  The limit of the recent ``price learning'' approaches for this problem on submodular or subadditive bidders in~\cite{Dobzinski16,AS-FOCS19} and our work seems to be
 an $\Omega(\log\!\log{m})$ approximation; it is an interesting open question if we can improve our mechanisms to match this limit. However, the most fascinating open question in this line of work is 
 to improve the approximation factor all the way down to a constant or alternatively prove that this is not possible. At this point, we do not even have a lower bound that rule out such possibility for deterministic mechanisms 
 even though the best deterministic mechanism for this problem only achieves an $O(\sqrt{m})$ approximation~\cite{DobzinskiNS06}. Thus, improving the lower bound techniques 
 for this problem is another very interesting open question. 

{\small
\bibliographystyle{alpha}
\bibliography{bib}
}

\newpage

\appendix

\IGNORE{
\medskip
\noindent
{\bf Acknowledgments}.
We thank a number of colleagues for useful discussions.
We are grateful to Sylvia Boyd and Paul Elliott-Magwod
for information on ATSP integrality gaps.
}


\IGNORE{
}

\end{document}